\documentclass[copyright,creativecommons]{eptcs}

\usepackage{underscore}           

\usepackage{amsmath}
\usepackage{amsthm}
\usepackage{amssymb}
\usepackage{stmaryrd}
\usepackage{xspace}
\usepackage{mathpartir}
\usepackage{algorithm, algorithmicx,algpseudocode}
\usepackage{mathtools}
\usepackage{wrapfig}
\usepackage{tikz}
\usetikzlibrary{automata,calc,shapes}


\theoremstyle{plain}
\newtheorem{theorem}{Theorem}
\numberwithin{theorem}{section}
\newtheorem{lemma}[theorem]{Lemma}

\theoremstyle{definition}
\newtheorem{remark}[theorem]{Remark}

\algnewcommand\algorithmicswitch{\textbf{switch}}
\algnewcommand\algorithmiccase{\textbf{case}}
\algnewcommand{\algorithmicglobal}{\textbf{global}}

\algdef{SE}[GLOBAL]{Global}{EndGlobal}[1]{\algorithmicglobal\ #1}{\algorithmicend\ \algorithmicglobal}
\algdef{SE}[SWITCH]{Switch}{EndSwitch}[1]{\algorithmicswitch\ #1\ \algorithmicdo}{\algorithmicend\ \algorithmicswitch}%
\algdef{SE}[CASE]{Case}{EndCase}[1]{\algorithmiccase\ #1}{\algorithmicend\ \algorithmiccase}%
\algtext*{EndProcedure}
\algtext*{EndSwitch}
\algtext*{EndCase}
\algtext*{EndGlobal}%
\algtext*{EndIf}
\algtext*{EndFor}

\newcommand{\Env}{\textsc{env}}
\newcommand{\Mcname}{\textsc{Eval}\xspace}
\newcommand{\Mc}[2]{\ensuremath{\Mcname}(#1,#2)}
\newcommand{\Mcc}[3]{\ensuremath{\Mcname}_{#1}(#2,[#3])}

\newcommand{\States}{\ensuremath{\mathcal{S}}}
\newcommand{\Prop}{\ensuremath{\mathcal{P}}}
\newcommand{\Actions}{\ensuremath{\mathcal{A}}}

\def\newarrow#1{\mathop{{\hbox{\setbox0=\hbox{$\scriptstyle{#1\quad}$}{$%
					\mathrel{\mathop{\setbox1=\hbox to
							\wd0{\rightarrowfill}\ht1=3pt\dp1=-2pt\box1}\limits^{#1}}%
					$}}}}}

\newcommand{\Transition}[3]{\ensuremath{#1 \newarrow{#2} #3}}

\newcommand{\myfalse}{\ensuremath{\mathtt{f\!f}}}

\newcommand{\sem}[3]{\ensuremath{\llbracket #1 \rrbracket^{#2}_{#3}}}

\newcommand{\Var}{\ensuremath{\mathsf{Var}}} 
\newcommand{\Func}{\ensuremath{\mathsf{Func}}} 

\newcommand{\mydia}[1]{\ensuremath{\langle#1\rangle}} 
\newcommand{\myddia}[1]{\ensuremath{\langle\!\langle#1\rangle\!\rangle}} 
\newcommand{\mybox}[1]{\ensuremath{[#1]}} 

\newcommand{\grtype}{\blacklozenge}
\newcommand{\ord}[1]{\mathsf{ord}(#1)}
\newcommand{\varmon}{+}
\newcommand{\varant}{-}
\newcommand{\varboth}{\pm}

\newcommand{\hfl}{\textup{HFL}\xspace}

\newcommand{\muHO}{\ensuremath{\mu\mathsf{HO}}\xspace}

\renewcommand{\epsilon}{\varepsilon}

\newcommand{\cod}{\texttt{c}}
\newcommand{\blck}{\texttt{b}}
\newsavebox{\textvisiblespacebox}
\savebox{\textvisiblespacebox}{\texttt{x\!\!x}}
\newcommand\vartextvisiblespace[1][\wd\textvisiblespacebox]{%
  \makebox[#1]{\kern.1em\rule{.4pt}{.3ex}%
  \hrulefill%
  \rule{.4pt}{.3ex}\kern.1em}%
}
\newcommand{\spc}{\vartextvisiblespace}


\allowdisplaybreaks

\title{Local Higher-Order Fixpoint Iteration}
\author{Florian Bruse
\institute{School of Electrical Engineering and Computer Science \\ University of Kassel, Germany}
\and
J\"org Kreiker
\institute{Institute for Computer Science \\ University of Applied Sciences Fulda, Germany}
\and
Martin Lange \quad \quad Marco S\"alzer
\institute{School of Electrical Engineering and Computer Science \\ University of Kassel, Germany}
}

\begin{document}
\maketitle

\begin{abstract}
Local fixpoint iteration describes a technique that restricts fixpoint iteration in function 
spaces to needed arguments only. It has been studied well for first-order functions in abstract interpretation and also in model checking. Here we consider the problem for least and greatest 
fixpoints of arbitrary type order. We define an abstract algebra of simply typed higher-order functions with fixpoints that can express fixpoint evaluation
problems as they occur routinely in various applications, including program verification. We present an algorithm that realises local fixpoint iteration for such higher-order fixpoints, 
prove its correctness and study its optimisation potential in the context of several applications. 
\end{abstract}


\section{Introduction}
\label{sec:intro}

Fixpoints are ubiquitous in computer science. They serve to explain the meaning of recursion
in programming languages \cite{Winskel93}, database queries \cite{Book:95:Abiteboul:DBFundamentals}, formal 
languages and automata theory \cite{SimoTenn99}; they are being used as logical quantifiers in descriptive
complexity \cite{IB-C992509} or as specialised operators, for instance in temporal logics \cite{TLCS-DGL16}, etc.

Fixpoints often link a denotational and an algorithmic view onto computational problems, most specifically through
Kleene's Fixpoint Iteration Theorem \cite{Kleene:1938}: start with the least, resp.\ greatest value of the underlying 
lattice, and then keep applying the function under consideration until the sequence becomes stable. 
This theorem provides the algorithmic foundations for many applications in which fixpoints
play an important role. For instance in model checking, fixpoint operators are used to describe correctness properties 
\cite{ICALP::EmersonC1980}, and methods based on fixpoint iteration are being used to establish the satisfaction of 
such properties by models of programs \cite{books/daglib/0020348}. 
Fixpoints are used in programming language semantics to explain the meaning of recursive programs. This extends to
static analysis methods. For instance in the original formulation of abstract interpretation
\cite{cousot1977abstract}, collecting semantics extends program semantics to powersets of semantic values ordered by
subset inclusion \cite{cousot1994higher}. Computing program properties then amounts to solving fixpoint equations
over a number of specific (powerset) domains. Fixpoint iteration also provides a standard means for the evaluation of 
recursive database queries \cite{Book:95:Abiteboul:DBFundamentals}.

In many applications, the elements of the lattices in which fixpoints are being sought, are functions themselves. 
In strictness analyses for functional languages \cite{mycroft1980theory,burn1986strictness} for instance, properties
under consideration are sets of functions. Denotational semantics is perhaps the application domain which is most easily 
seen to need lattices of functions, possibly of higher order, in order to explain the meaning of, for example, functional 
programs of higher order. Certain infinite-state model checking problems, in particular so-called higher-order model
checking \cite{conf/lics/Ong06} are tightly linked to the evaluation of fixpoints in functions spaces as well \cite{DBLP:conf/popl/KobayashiLB17}.

We are concerned with the problem of finding fixpoints in such a lattice of functions of some higher order.
Kleene fixpoint iteration in its pure form can still be employed here, but in many situations it is na\"{\i}ve and
inefficient for the following reason. Suppose one is not interested in the entire fixpoint $f$ (which is a function of
some type $M \to N$) but only in the value of this $f$ on some particular argument $x \in M$. Na\"{\i}ve fixpoint 
iteration would start by approximating this function with the least one $f_0 := \underline{\enspace} \mapsto \bot_N$ that 
maps anything to the least element of the lattice $N$, and successively compute better approximations $f_1,f_2,\ldots$ 
until $f_{i+1} = f_i$ for some $i$. Then it would return $f_i(x)$ which equals $f(x)$.

This procedure has then also computed values $f(y)$ for any $y \in M$. It has been observed that a more efficient
approach would be goal-driven and avoid the computation of $f$ on any unnecessary argument. Note that, since $f$ is
defined recursively, the value $f(x)$ may depend on \emph{some} but \emph{not all} values $f(y)$ for $y \in M$. The
term \emph{neededness analysis} was coined to describe the goal-driven evaluation of fixpoints in function lattices, 
avoiding the computation of function values on arguments that do not contribute to the computation of the one value
of interest. 

Neededness analysis has been studied well for lattices of first-order functions as they often arise in abstract 
interpretation \cite{Jorgenson94}. In groundness analyses for logical programs such as
\cite{le1993groundness}, instead of neededness, one rather speaks of a fixpoint computations being \emph{local}
\cite{fecht1996even}, when a solver tries to only compute the values of as few variables as possible.
Neededness analysis has also been studied in the context of model checking complex program properties which cannot be
described in the standard temporal logics of regular expressivity (CTL, $\mu$-calculus) but in extensions using 
predicate transformers \cite{DBLP:conf/lpar/AxelssonL07}. This can be seen as a notion of \emph{on-the-fly} 
model checking for fixpoints of order $1$. Since ``local'' is also a synonym for ``on-the-fly'' in model checking 
\cite{Bhat:1996:ELM}, we stick to the term \emph{local} fixpoint iteration here rather than the more cumbersome 
\emph{neededness analysis}, when referring to a method to avoid the computation of all arguments of fixpoints which
are functions themselves.

In this paper we consider the applicability of local fixpoint iteration in function lattices to arbitrary higher
orders. To this end, we define a simple abstract and typed \emph{higher-order fixpoint algebra} in
Sect.~ \ref{sec:prelim} which can be used to describe evaluation problems involving fixpoints in such lattices. We 
then give a generic local algorithm for evaluating fixpoint terms in higher-order lattices in Sect.~\ref{sec:algo}.
It optimises the na\"{\i}ve fixpoint iteration method by localising the evaluation of recursively defined functions 
at the top order. A formal proof of its correctness is omitted due to space constraints. In Sect.~\ref{sec:applications} we present some computation 
problems which are special instances of the evaluation of higher-order fixpoints in various domains and discuss local
evaluation's optimisation potential by comparing numbers of iteration, resp.\ argument computation steps on some 
hand-crafted examples. In Sect.~\ref{sec:operands} we briefly sketch limitations to the local approach of fixpoint 
iteration for higher-order fixpoints, in the form of obstacles to overcome which do not exist in the first-order
case. We conclude in Sect.~\ref{sec:concl} with an outlook onto further work in this area.


\section{An Abstract Higher-Order Fixpoint Algebra}
\label{sec:prelim}

\paragraph*{Types and higher-order lattices.} Let $\grtype$ be some base type. Types are derived from the grammar
	\begin{align*}
		\tau &::= \grtype \mid \tau^v \times \dotsm \times \tau^v \to \tau \enspace , \quad v ::= \varmon, \varant, \varboth 
	\end{align*}
where the annotations $v$ are called \emph{variances}, and they specify the dependency of the values of a function of type $\tau_1^{v_1} \times \dotsm \times \tau_n^{v_n} \to \tau$ on their
arguments. In particular, if $v_i = \varmon$, then this dependency is \emph{monotonic}, if $v_i = \varant$ then it is \emph{antitonic}, and it $v_i = \varboth$ then it is unspecified.

The \emph{order} of a type $\tau$ is $\ord{\grtype} := 0$ and $\ord{\tau_1^{v_1} \times \dotsm \times \tau_n^{v_n} \to \tau} := \max \{ \ord{\tau}, \ord{\tau_1}+1,\ldots,$ $\ord{\tau_n}+1 \}$

As usual, a function $f$ on a partially ordered set $(M, \le)$ is \emph{monotonic} if for all $x,y \in M$ with $x \le y$ we have $f(x) \le f(y)$. It is 
\emph{antitonic} if for all such $x,y$ we have $f(y) \le f(x)$. A lattice is a partial order in which suprema and infima, denoted $x \sqcup y$, $x \sqcap y$, resp.\ $\bigsqcup X$ and $\bigsqcap X$
for any $X \subseteq M$ exist, for as long as $X$ is finite.  A lattice is \emph{complete} if these also exist for arbitrary $X$. Complete lattices always contain a least and a
greatest element, usually denoted $\bot$ and $\top$ here. A finite lattice is trivially complete.

Let $\mathcal{M} = (M, \le)$ and $\mathcal{M}_i = (M_i, \le_i)$ for some $i=1,\ldots,n$ be complete lattices. Remember the following constructions on lattices:

\emph{Inverse}: $\mathcal{M}^\varant := (M, \ge)$ where $x \ge y$ iff $y \le x$. For notational convenience we also let $\mathcal{M}^{\varmon} := \mathcal{M}$. It should be clear that these
      two operations not only preserve the property of being a lattice but also completeness.

\emph{Flattening}: $\mathcal{M}^\varboth := (M,=)$, where $=$ denotes equality as usual. Note that $\mathcal{M}^\varboth$ is in general not a lattice anymore, let alone a complete one. 

\emph{Product}: $\prod_{i=1}^n \mathcal{M}_i := (M_1 \times \dotsm \times M_n, \sqsubseteq)$ where $(x_1,\ldots,x_n) \sqsubseteq (y_1,\ldots, y_n)$ iff $x_i \le_i y_i$ for all $i=1,\ldots,n$.
      The product lattice is complete if all its components are complete.

\emph{Higher-order}: $\mathcal{M}_1 \to \mathcal{M}_2 := (\{ f\colon M_1 \to M_2 \mid f$ is monotonic $\}, \sqsubseteq)$ where $f \sqsubseteq g$ if $f(x) \le_2 g(x)$ for all $x \in M_1$.
      The lattice of componentwise ordered monotonic functions from $M_1$ to $M_2$ is complete if $\mathcal{M}_2$ is complete. Completeness of $\mathcal{M}_1$ is not required, not even
      the property of being a lattice since $\le_1$ is not used in the definition of $\sqsubseteq$. 

We can use these constructions to associate, with each type $\tau$, a complete lattice $\mathcal{M}_\tau$, given a complete lattice $\mathcal{M}$ interpreting the ground type $\grtype$:
\begin{displaymath}
\sem{\grtype}{\mathcal{M}}{} := \mathcal{M} \quad, \quad 
\sem{\tau_1^{v_1} \times \dotsm \times \tau_n^{v_n} \to \tau}{\mathcal{M}}{} := \big(\prod_{i=1}^n (\sem{\tau_i}{\mathcal{M}}{})^{v_i}\big) \to \sem{\tau}{\mathcal{M}}{}
\end{displaymath}
Note that each $\sem{\tau}{\mathcal{M}}{}$ is indeed a complete lattice given the remarks above, as the flattening operation that breaks the lattice property is only used on the argument side
of the function operator. Moreover, if $\mathcal{M}$ is finite, then so is $\sem{\tau}{\mathcal{M}}{}$ for all $\tau$.

\paragraph*{Abstract higher-order fixpoint algebra.} Let $\mathcal{M}$ be a complete lattice and $\Func = \{f\colon\tau_f,$ 
$g\colon\tau_g,\ldots\}$ be a set of \emph{computable} and \emph{typed} functions on $\mathcal{M}$, possibly of higher-order. 
Note that if $\tau_f = \grtype$, then $f$ is not really a function but rather a constant. For simplicity we speak of 
functions in this case as well.

Let $\Var := \{x \colon  \tau_x,y\colon\tau_y,\ldots\}$ be a set of typed variables. We write $\tau_x$, 
resp.\ $\tau_f$ for the uniquely determined type of variable $x$, resp.\ function $f$. We will also simply write $x \in \Var$
instead of $(x,\tau_x) \in \Var$ and likewise for the members of $\Func$. 

\emph{Terms} of the abstract higher-order fixpoint algebra over $\Func$, $\muHO(\Func)$ or simply $\muHO$ when $\Func$ is clear 
from the context, are built via 
\begin{displaymath}
\varphi,\varphi_1,\ldots,\varphi_n \enspace ::= \enspace x \mid f \mid \varphi(\varphi_1,\ldots,\varphi_n) \mid \lambda x_1^{v_1},\ldots,x_n^{v_n}.\,\varphi \mid \mu x.\,\varphi \mid \nu x.\,\varphi
\end{displaymath}
where $x_1,\ldots,x_n \in \Var$, $f \in \Func$ and $v_1,\ldots,v_n \in \{\varmon,\varant,\varboth\}$. 

A term $\varphi$ is \emph{closed} if it contains no free variables, where an occurrence of a variable $x$ is free if it is not 
under the scope of some $\lambda \ldots x \ldots$ or $\mu x$ or $\nu x$ in the syntax tree of $\varphi$. In the following, we 
are mainly interested in closed terms; others will usually only occur as subterms of these. Hence, we will often simply speak 
of terms when in fact we mean closed terms at syntactic top-level.

We assume terms to be \emph{well-named}, i.e.\ each variable is bound at most once. Clearly, any term can always be made 
well-named by renaming bound variables.

For better readability, we simply write $\sigma x(y_1^{v_1},\ldots,y_n^{v_n}).\,\varphi$ instead of 
$\sigma x.\,\lambda y_1^{v_1},\ldots,y_n^{v_n}.\,\varphi$, for $\sigma \in \{\mu,\nu\}$. 

\begin{figure}
\begin{mathpar}
\inferrule{\\}{\Gamma \vdash f\colon\tau_f} \and
\inferrule{\Gamma \vdash \varphi\colon \tau_1^{v_1} \times \dotsm \times \tau_n^{v_n} \to \tau \\ \Gamma^{v_1} \vdash \varphi_1\colon \tau_1 \\ \ldots \\ \Gamma^{v_n} \vdash \varphi_n\colon \tau_n}{\Gamma \vdash \varphi(\varphi_1,\ldots,\varphi_n) \colon \tau} \and
\inferrule{v \in \{\varmon,\varboth\}}{\{ x^v, \ldots \} \vdash x\colon\tau_x} \and
\inferrule{\Gamma[x_i^{v_i} \mid i=1,\ldots,n] \vdash \varphi \colon \tau}{\Gamma \vdash \lambda x_1^{v_1},\ldots,x_n^{v_n}.\,\varphi \colon \tau_{x_1}^{v_1}\times \dotsm \times \tau_{x_n}^{v_n} \to \tau} \and
\inferrule{\Gamma[x^\varmon] \vdash \varphi \colon \tau_x}{\Gamma\vdash \sigma x.\,\varphi \colon\tau_x} 
\end{mathpar}
\caption{The typing rules for abstract higher-order fixpoint algebra.}
\label{fig:typerules}
\end{figure}

In order to give terms a well-defined semantics via the Knaster-Tarski Theorem, each $\varphi$ in a term $\mu x.\varphi$ or 
$\nu x.\varphi$ needs to denote a function that is monotone in its argument $x$. Monotonicity is guaranteed for \emph{well-typed terms}, 
to be explained next, and then formally stated as Lemma~\ref{lem:monotonicity} below. Note that the variances are used to track information 
about the monotonicity or antitonicity of functions in particular arguments, and that a monotonic function can be built for instance 
by composing two antitonic ones.
 
A \emph{typing statement} is a triple $\Gamma \vdash \varphi\colon\tau$ where $\varphi$ is a term, $\tau$ is a type, and $\Gamma$ 
is a \emph{typing context} consisting of \emph{typing hypotheses} of the form $x^v$ for $x \in \Var$ and 
$v$ being a variance. For a typing context $\Gamma$, let $\Gamma^\varmon := \Gamma$; 
let $\Gamma^\varant$ result from $\Gamma$ by replacing in it every $x^\varmon$ by $x^\varant$ and vice-versa; 
and let $\Gamma^\varboth = \Gamma^\varmon \cap \Gamma^\varant$, i.e.~the context which only contains typing hypotheses of the 
form $x^\varboth$ from $\Gamma$. The typing context $\Gamma[x^v]$ is obtained by removing $x^{v'}$ from 
$\Gamma$ for any $v'$, and adding $x^v$ instead.

A term $\varphi$ \emph{has type} $\tau$ if the typing statement $\emptyset \vdash \varphi: \tau$ is derivable using the typing rules 
given in Fig.~\ref{fig:typerules}. The rules are standard; they state, for instance, that in function application 
$\varphi(\varphi_1,\ldots,\varphi_n)$, $\varphi$ must have a function type with $n$ arguments which are the types of the respective 
argument terms. Moreover, the arguments themselves have to be typbale in the respective derived typing contexts. For example, if $\varphi$ is antitonic in its first argument, then $\varphi_1$ has to be typable in the typing context $\Gamma^\varant$, where $\Gamma$ is the context used to type the whole application. This reflects the fact that an antitonic function from some lattice is a monotonic function from the inverse of this lattice (cf.~the lattice definitions above and the definition of the semantics below). The rules for fixpoint formulas $\sigma x^\tau.\varphi$ require the term $\varphi$ to be of the same type as $x$,
since being a fixpoint intuitively means $x = \varphi(x)$, and at the same time ensure that $\varphi$ is monotonic in $x$. 
A term is \emph{well-typed}, if it is of some type. 
 
Variance annotations are only used to guarantee well-typedness (and therefore the existence of fixpoints). We will 
always assume that terms are well-typed, and therefore often drop typing annotations for better readability. Note that 
for closed terms, 
a unique type for each subterm can easily be recovered. 

\paragraph*{The semantics of terms.} Let $\mathcal{M}$ be a complete lattice, and suppose that all base functions $\Func = \{f\colon\tau_f, \ldots\}$ have an interpretation $f^{\mathcal{M}}$ in the family of higher-order lattices
over $\mathcal{M}$ according to their types. A term $\varphi$ of $\muHO(\Func)$ over $\Func = \{f\colon\tau_f, \ldots\}$ and a set of typed variables $\Var = \{ x\colon\tau_x, \ldots \}$ 
gets interpreted in this family of lattices. In order to explain the value inductively, we need variable interpretations $\eta$ which assign values in lattices over $\mathcal{M}$ to any variable with free 
occurrences in subterms: for each $x\colon\tau_x \in \Var$ we have $\eta(x) \in \sem{\tau_x}{\mathcal{M}}{}$. The value of $\varphi$ over $\mathcal{M}$ and under $\eta$ is denoted 
$\sem{\varphi}{\mathcal{M}}{\eta}$ and is defined inductively as follows.
\begin{align*}
\sem{x}{\mathcal{M}}{\eta} &:= \eta(x) &
\sem{\varphi(\varphi_1,\ldots,\varphi_n)}{\mathcal{M}}{\eta} &:= \sem{\varphi}{\mathcal{M}}{\eta}(\sem{\varphi_1}{\mathcal{M}}{\eta}, \ldots, \sem{\varphi_n}{\mathcal{M}}{\eta}) \\
\sem{f}{\mathcal{M}}{\eta} &:= f^\mathcal{M} &
\hspace*{-5mm} \sem{\lambda x_1^{v_1},\ldots,x_n^{v_n}.\,\varphi}{\mathcal{M}}{\eta} &:= (f_1,\ldots,f_n) \mapsto \sem{\varphi}{\mathcal{M}}{\eta[x_1 \mapsto f_1, \ldots, x_n \mapsto f_n]} \\
\sem{\mu x.\,\varphi}{\mathcal{M}}{\eta} &:= \bigsqcup \{ f \in \sem{\tau_x}{\mathcal{M}}{} \mid \sem{\varphi}{\mathcal{M}}{\eta[x \mapsto f]} \sqsubseteq f \} &
\sem{\nu x.\,\varphi}{\mathcal{M}}{\eta} &:= \bigsqcap \{ f \in \sem{\tau_x}{\mathcal{M}}{} \mid f \sqsubseteq \sem{\varphi}{\mathcal{M}}{\eta[x \mapsto f]} \}
\end{align*} 
The fourth clause, in 
particular its right-hand side, denotes the function that maps a tuple $(f_1,\ldots,f_n)$ of objects from $\sem{\tau_{x_1}}{\mathcal{M}}{} \times \dotsm \times \sem{\tau_{x_n}}{\mathcal{M}}{}$ to
the value $\sem{\varphi}{\mathcal{M}}{\eta[x_1 \mapsto f_1, \ldots, x_n \mapsto f_n]}$ where the subscript index denotes the variable environment that results from $\eta$ by replacing its
bindings for $x_1,\ldots,x_n$ accordingly. For the last two clauses, note that the values on the right-hand side are well-defined according to the Knaster-Tarski Theorem \cite{Tars55} since 
each $\sem{\tau_x}{\mathcal{M}}{}$ is a complete lattice. Note that the semantics of $\muHO$ are easily seen to be invariant under $\beta$-reduction.

\begin{lemma}
\label{lem:monotonicity}
Let $\mathcal{M}$ be a lattice, $\varphi$ be a term of type $\tau'$ under the typing assumptions $\Gamma, x^v$. 
If $ v = \varmon$, then $\sem{\varphi}{\mathcal{M}}{\eta}$ is monotone in $\eta(x)$; if $v=\varant$ then 
$\sem{\varphi}{\mathcal{M}}{\eta}$ is antitone in $\eta(x)$. 
\end{lemma}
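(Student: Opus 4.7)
The plan is to proceed by structural induction on the typing derivation of $\Gamma, x^v \vdash \varphi \colon \tau'$, using the statement of the lemma itself (both directions simultaneously, one for each value of $v$) as the induction hypothesis. The base cases are immediate: for $\varphi = f \in \Func$, the value $f^{\mathcal{M}}$ does not depend on $\eta$ at all; for $\varphi = y$ with $y \neq x$, the value $\eta(y)$ does not depend on $\eta(x)$; and for $\varphi = x$ the variable rule forces $v \in \{\varmon, \varboth\}$, so the only relevant subcase is $v = \varmon$, for which $\sem{x}{\mathcal{M}}{\eta} = \eta(x)$ is trivially monotone.

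The interesting case is function application $\varphi_0(\varphi_1, \ldots, \varphi_n)$. The typing rule yields $\Gamma, x^v \vdash \varphi_0 \colon \tau_1^{v_1} \times \dotsm \times \tau_n^{v_n} \to \tau$ and $(\Gamma, x^v)^{v_i} \vdash \varphi_i \colon \tau_i$ for each $i$. A quick inspection of $\Gamma^{v_i}$ shows that the variance of $x$ in $(\Gamma, x^v)^{v_i}$ is $v$ if $v_i = \varmon$, the opposite of $v$ if $v_i = \varant$, and absent altogether if $v_i = \varboth$ (since then $x^v$ does not survive the intersection $\Gamma^\varmon \cap \Gamma^\varant$); in this last subcase $\sem{\varphi_i}{\mathcal{M}}{\eta}$ is simply independent of $\eta(x)$. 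Given $\eta, \eta'$ that differ only at $x$ in the direction dictated by $v$, one chains
\begin{displaymath}
\sem{\varphi_0}{\mathcal{M}}{\eta}\bigl(\sem{\varphi_1}{\mathcal{M}}{\eta}, \ldots, \sem{\varphi_n}{\mathcal{M}}{\eta}\bigr)
\;\sqsubseteq\;
\sem{\varphi_0}{\mathcal{M}}{\eta'}\bigl(\sem{\varphi_1}{\mathcal{M}}{\eta'}, \ldots, \sem{\varphi_n}{\mathcal{M}}{\eta'}\bigr)
\end{displaymath}
in two steps: first replace $\sem{\varphi_0}{\mathcal{M}}{\eta}$ by $\sem{\varphi_0}{\mathcal{M}}{\eta'}$ using the IH on $\varphi_0$; then update each argument using the IH on $\varphi_i$ together with the fact that, by the lattice construction $\mathcal{M}_1 \to \mathcal{M}_2$ over $\prod_i (\sem{\tau_i}{\mathcal{M}}{})^{v_i}$, any element of $\sem{\tau_1^{v_1} \times \dotsm \times \tau_n^{v_n} \to \tau}{\mathcal{M}}{}$ is monotone in arguments with $v_i = \varmon$ and antitone in arguments with $v_i = \varant$. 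The main obstacle is the bookkeeping of sign combinations across all choices of $v$ and $v_i$ and the need to combine two different variance appeals (one in $\eta(x)$ via the IH on $\varphi_i$, and one in the function argument via the type of $\varphi_0$) into a single inequality of the right orientation.

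Abstraction $\lambda y_1^{v_1}, \ldots, y_n^{v_n}.\,\varphi$ is straightforward: under the well-named assumption the outer hypothesis $x^v$ is untouched in $(\Gamma, x^v)[y_i^{v_i} \mid i]$, so the IH supplies the correct variance of the body pointwise for every tuple of bindings for the $y_i$, and since the order on the function lattice is componentwise this transfers to the whole abstraction. For a fixpoint $\sigma y.\,\varphi$ with $\sigma \in \{\mu, \nu\}$, the context becomes $\Gamma, x^v, y^\varmon$, so the IH on $\varphi$ gives the correct variance of $\sem{\varphi}{\mathcal{M}}{\eta[y \mapsto f]}$ in $\eta(x)$ for every fixed $f$. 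A standard Knaster--Tarski argument then finishes: changing $\eta(x)$ in the direction prescribed by $v$ makes the set of pre-fixpoints (for $\mu$) or of post-fixpoints (for $\nu$) shrink or grow in the right direction, and the corresponding infimum, resp.~supremum, moves accordingly, giving the desired (anti)monotonicity of $\sem{\sigma y.\,\varphi}{\mathcal{M}}{\eta}$ in $\eta(x)$.
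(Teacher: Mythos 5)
Your proposal is correct and follows exactly the route the paper intends: the paper's proof is just the one-line remark ``by a straightforward induction on the syntax tree of $\varphi$'', and your argument is a faithful, fully detailed elaboration of that induction (carried out on the typing derivation, with the right handling of the variance sign flips in the application case and the standard Knaster--Tarski comparison for the fixpoint case). No gaps.
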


\begin{proof} 
By a straightforward induction on the syntax tree of $\varphi$.
\end{proof}

\begin{remark}
\label{rem:kleene}
Over \emph{finite} lattices, each of the type lattices is finite as well. According to Kleene's Fixpoint Theorem, the least and 
greatest fixpoints of a term $\sigma x.\, \varphi$ in $\muHO$ under a variable interpretation $\eta$ can be computed 
by a sequence of approximations as follows: $x_\eta^0 = \top_{\tau_x}$ if $\sigma = \nu$, $x_\eta^0 = \bot_{\tau_x}$ if $\sigma = \mu$, 
and $x_\eta^{i+1} = \sem{\varphi}{\mathcal{M}}{\eta[x\mapsto x_\eta^i]}$. Then, for each finite lattice $\mathcal{M}$ there is 
$n \in \mathbb{N}$ such that $\sem{\sigma x.\, \varphi}{\mathcal{M}}{\eta} = x_\eta^n$. Moreover, these approximations are 
definable in $\muHO$, independently of $\eta$: $\hat{\sigma}$ is defined by $x_\eta^0 = \sigma x.\, x$, and $x_\eta^{i+1}$ is 
defined by the substitution instance $\varphi[x_\eta^i/x]$.
\end{remark}

\paragraph*{Evaluation problems.} We consider the following generic \emph{evaluation} problem: given a (closed) term $\varphi$ of 
$\muHO(\Func)$ with symbols in $\Func$ interpreted in the higher-order lattices over a finite $\mathcal{M}$, compute 
$\sem{\varphi}{\mathcal{M}}{}$. 

This problem is clearly decidable when all basic functions in $\Func$ are computable. A na\"{\i}ve algorithm will simply compute 
the value of each subterm in a bottom-up fashion using Kleene iteration to evaluate fixpoint expressions, and possibly storing 
function values as tables. Note that if $\mathcal{M}$ is finite, so is $\sem{\tau}{\mathcal{M}}{}$ for any $\tau$, but the size 
and height of $\sem{\tau}{\mathcal{M}}{}$ are $k$-fold exponential in the size, resp.\ height of $\mathcal{M}$ when $k = \ord{\tau}$. 

Even for low orders, such a na\"{\i}ve algorithm may perform far too many unnecessary computation steps. Consider the following 
special \emph{local} variant of the evaluation problem: given a finite complete lattice $\mathcal{M}$, a closed term 
$\varphi_0 := \mu x.\,\varphi$ of type $\tau^v \to \grtype$ (which is then necessarily the same as $\tau_x$) for some 
$v,\tau,\varphi$, and a term $\psi$ of type $\tau$, compute $\sem{\varphi_0(\psi)}{\mathcal{M}}{}$.

Note how this problem formulation describes a situation in which na\"{\i}ve fixpoint iteration obviously performs too many 
evaluation steps in general: it computes $\sem{\varphi_0}{\mathcal{M}}{}$ using Kleene iteration which results in a function 
of type $\tau^v \to \grtype$. Depending on the order of $\tau$, this function is huge in terms of its arguments but still finite. 
We would then also compute $\sem{\psi}{\mathcal{M}}{}$. Then we obtain the value $\sem{\varphi_0(\psi)}{\mathcal{M}}{}$ by 
application, for instance through a simple look-up in the table representing $\sem{\varphi_0}{\mathcal{M}}{}$, where 
$\sem{\psi}{\mathcal{M}}{}$ occurs as some argument. Clearly, the value of $\sem{\varphi_0}{\mathcal{M}}{}$ on all other 
arguments is irrelevant, and the reason for their computation is questionable.

\paragraph{A (first-order) example.} Consider the Boolean lattice $\mathbb{B} = \{\top,\bot\}$ and the normal Boolean functions 
$\Func_{\mathsf{bool}} = \{ \wedge,\vee\colon \grtype^\varmon \times \grtype^\varmon \to \grtype, \neg\colon \grtype^\varant \to \grtype, 0,1\colon \grtype \}$ interpreted in the standard way. Let $n > 0$ and
\begin{align*}
\varphi_n \enspace := \enspace \mu F(\underbrace{x_0,\ldots,x_{n-1}}_{\vec{x}}).\ \mathsf{null}(\vec{x}) \vee
&\big(\mathsf{even}(\vec{x}) \wedge F(\mathsf{half}(\vec{x}))\big) \\[-5mm] \vee &
\big(\neg\mathsf{even}(\vec{x}) \wedge F(\mathsf{add}(\mathsf{add}(\vec{x},\mathsf{dbl}(\vec{x})),(1,0,\ldots,0)))\big)\ .
\end{align*}
over $\Var = \{ x_0,\ldots,x_{n-1}\colon \grtype, F\colon (\grtype^\varboth)^n \to \grtype \}$ where, for $\vec{x} = (x_0,\ldots,x_{n-1})$ and $\vec{y} = (y_0,\ldots,y_{n-1})$,
\begin{itemize}
\item $\mathsf{null}(\vec{x})$ returns $\top$ iff $\vec{x}$ encodes the numerical value $0$, for instance $\mathsf{null}(\vec{x}) = \bigwedge_{i=0}^{n-1} \neg x_i$;
\item $\mathsf{even}(\vec{x})$ returns $\top$ iff $\vec{x}$ encodes an even number, for instance $\mathsf{even}(\vec{x}) = \neg x_0$;
\item $\mathsf{half}$ and $\mathsf{dbl}$ represent the operations ``$\div 2$'' and ``$\cdot 2$'' on bit strings;
\item $\mathsf{add}(\vec{x},\vec{y})$ yields a bit string representing the addition of the two values modulo $2^n$.
\end{itemize}
It is not difficult to find Boolean functions realising these operations.

Intuitively, $\varphi_n$ defines a search procedure. Note that any value of $\vec{x}$ encodes a number in the range $[2^n] = \{0,\ldots,2^n-1\}$ which we will simply denote $\vec{x}$ as well. For any
value $\vec{x}$, define a sequence $(\vec{x}_k)_{k \ge 0}$ via $\vec{x}_0 = \vec{x}$, $\vec{x}_{k+1} = \vec{x}_k / 2$ if $\vec{x}$ is even, and $\vec{x}_{k+1} = 3\cdot \vec{x}_k + 1$ if 
$\vec{x}$ is odd.
Hence, suppose $\vec{b} \in \{0,1\}^n$ encodes such a number, then $\varphi_n(\vec{b})$ is true iff this sequence eventually hits the value $0$.

Let $n = 3$. The graph on the right depicts the sequence on all values in $[2^3]$. Assuming 
\begin{wrapfigure}{r}{2.6cm}
\vspace*{-4mm}
\begin{tikzpicture}
  \foreach \x in {0,...,7}
    \node (\x) at (360/8*\x: 1cm) {$\x$};
  
  \path[->] (0) edge [loop above]  ()
            (1) edge   (4)
            (2) edge              (1)
            (3) edge              (2)
            (4) edge [bend left=60]  (2)
            (5) edge   (0)
            (6) edge  (3)
            (7) edge              (6);
\end{tikzpicture}
\vspace*{-4mm}
\end{wrapfigure}
that the Boolean function $\wedge$ only evaluates its second argument when the first one is not $\bot$, this graph suggests how local fixpoint iteration of this first-order function $F$
can be more efficient when the value of the fixpoint $F$ is only needed on one particular argument. The effect of global fixpoint iteration is depicted in Fig.~\ref{fig:example} (left). Here, the iteration
starts with the least function $F^0: \underline{\enspace} \mapsto \bot$, and it terminates when the current approximation equals the last one. 

Local fixpoint iteration on the other hand only adds the arguments successively to those tables. Consider the case of evaluating $\varphi_3(1,0,1)$ which means iterating the numerical series beginning
at $5$. First we only tabulate the approximant $F^0$ on the value under consideration, i.e.\ $5$. In order to compute $F^1(5)$ we need $F^0(0)$, so $0$ gets added as a new argument and receives the
initial value $\bot$ there. Then we compute $F^2(5)$ and $F^1(0)$, and so on. The iteration stabilises when no change is being recorded anymore, thus computing values as they are shown in the table
of Fig.~\ref{fig:example} (middle). 

The effect of computing $\varphi_3(1,1,0)$, i.e.\ beginning the numerical series at $3$ is similar. Here, however, $0$ is never reached. Hence, all values encountered are $\bot$, and the iteration 
stabilises when no further arguments are needed, as shown in Fig.~\ref{fig:example} (right). 

Even though the local iteration computes a value of $F^4$ while the global one only reaches $F^3$, it should be clear that local evaluation performs fewer computation steps in general.

\begin{figure}
\begin{tabular}{r|c|c|c|c|c|c|c|c}
           & $0$    & $1$    & $2$    & $3$    & $4$    & $5$    & $6$    & $7$    \\ \hline\hline
$F^0$:     & $\bot$ & $\bot$ & $\bot$ & $\bot$ & $\bot$ & $\bot$ & $\bot$ & $\bot$ \\ \hline
$F^1$:     & $\top$ & $\bot$ & $\bot$ & $\bot$ & $\bot$ & $\bot$ & $\bot$ & $\bot$ \\ \hline
$F^2$:     & $\top$ & $\bot$ & $\bot$ & $\bot$ & $\bot$ & $\top$ & $\bot$ & $\bot$ \\ \hline
$F^3$:     & $\top$ & $\bot$ & $\bot$ & $\bot$ & $\bot$ & $\top$ & $\bot$ & $\bot$ \\ \multicolumn{1}{c}{ } 
\end{tabular}
\hfill
\begin{tabular}{r|c|c}
           & $5$    & $0$  \\ \hline\hline
$F^0$:     & $\bot$ & $\bot$ \\ \hline
$F^1$:     & $\bot$ & $\top$ \\ \hline
$F^2$:     & $\top$ & $\top$ \\ \hline
$F^3$:     & $\top$ &        \\ \multicolumn{1}{c}{ }
\end{tabular}\hfill
\begin{tabular}{r|c|c|c|c}
           & $3$    & $2$    & $1$    & $4$    \\ \hline\hline
$F^0$:     & $\bot$ & $\bot$ & $\bot$ & $\bot$ \\ \hline
$F^1$:     & $\bot$ & $\bot$ & $\bot$ & $\bot$ \\ \hline
$F^2$:     & $\bot$ & $\bot$ & $\bot$ &        \\ \hline
$F^3$:     & $\bot$ & $\bot$ &        &        \\ \hline   
$F^4$:     & $\bot$ &        &        &   
\end{tabular}
\caption{Global (left) vs.\ local fixpoint iteration for a first-order function $F$.}
\label{fig:example} 
\end{figure}

\begin{remark} 
It is well-known that fixpoint iteration does not need to record the entire history of its computation but, for each variable, merely the value of the last iteration. In the left table (with only one fixpoint variable) in Fig.~\ref{fig:example}, 
this corresponds to two successive rows: the upper for the last approximate value of a function, and the lower for the current value. In the two tables on the right using local iteration, this
corresponds to diagonals, but the picture is more complicated in general, for instance for fixpoint terms in which the fixpoint variable has multiple occurrences. 

The tables shown in Fig.~\ref{fig:example} not only give an idea of how local fixpoint iteration works, their width and height are also good measures for the space, resp.\ time
needed to compute such a higher-order fixpoint.
\end{remark}


\section{Local Fixpoint Evaluation for Full \texorpdfstring{$\muHO$}{muHO}}
\label{sec:algo}
 
\paragraph*{The algorithm.}
Procedure $\Mcname$ in Alg.~\ref{alg:ahofa-mc} solves the evaluation problem for $\muHO$ terms of arbitrary higher order and
finite lattices using local fixpoint iteration. It takes four parameters:
(1) a term $\varphi \in \muHO(\Func)$ over some $\Func$. It is not necessarily of type $\grtype$, but the algorithm is assumed to be started with a full list of arguments (see below) in order to realise local fixpoint iteration.
(2) A finite, and, hence, complete lattice $\mathcal{M}$ with an interpretation of any $f\colon\tau_f \in \Func$ as an object in $\sem{\tau_f}{\mathcal{M}}{}$. 
(3) A list $T_1,\dotsc,T_k$ of arguments to $\varphi$. The following invariant is maintained: if the type\footnote{Variances are not important in this section. In order to reduce clutter, we do not display them.} of $\varphi$ is $\tau_1\to\dotsb\to\tau_k\to \grtype$, then $T_i \in \sem{\tau_i}{\mathcal{M}}{}$ for all $1 \leq i \leq k$. 
(4) A global variable $\Env$ that is used to interpret free variables. Values of $\lambda$-bound variables are stored as full functions\footnote{This might appear wasteful at first, but $\lambda$-bound variables are never of the highest type (by order) that occurs in the term to be evaluated except in pathological cases, which can be eliminated by $\beta$-reduction before calling $\Mcname$.}, values of fixpoint variables may be stored as partial 
 approximations as described at the end of the previous section. 

In order to bridge the gap between a variable assignment $\eta$, which assigns a value to each variable which is defined at every 
argument, and the global variable $\Env$ which only stores partial approximations for fixpoint-bound variables, consider the following definition. It turns a state of $\Env$ into a well-defined variable assignment:
\[
\eta_{\Env}(x)(T_1,\dotsc,T_k) = \begin{cases} 
    \Env(x)(T_1,\dotsc,T_k) &,\text{ if } x \text{ is } \lambda\text{-bound} \\ 
    \Env(x)(T_1,\dotsc,T_k) &,\text{ if } x \text{ is fixpoint-bound and } \Env(x)(T_1,\dotsc,T_k) \text{  is defined} \\
    \hat{\sigma}_x &, \text{ otherwise}
\end{cases}
\]
Here, $\hat{\sigma}_x$ is $\bot_{\grtype}$, resp.~$\top_{\grtype}$ for variables that are bound to a least, resp.~greatest fixpoint. 

Note that, due to the invariants, a state of Alg.~\ref{alg:ahofa-mc}, i.e.~a call of $\Mcname(\varphi,T_1,\dotsc,T_k)$ with a value of $\Env$ and over some higher-order lattice $\mathcal{M}$, can be thought of as computing the object $\sem{\varphi}{\mathcal{M}}{\eta_{\Env}}(T_1,\dotsc,T_k)$, which is always a member of $\sem{\grtype}{\mathcal{M}}{}$. The algorithm computes this value recursively by descending through the syntax tree of $\varphi$. Fixpoints are resolved by Kleene iteration until the semantics computed stabilises, but the value is only computed at the arguments indicated plus all those arguments that are discovered as necessary to obtain the value for the original argument.

\begin{algorithm}[t]
   \begin{tabbing}
     \textbf{procedure} $\Mc{\varphi}{T_1,\ldots,T_k}$: \hspace*{3.8cm} $\vartriangleright$ global (partial) $\Env: \Var \to\bigcup_{x \in \Var} \sem{\tau_x}{\mathcal{M}}{}$ \\
     \quad \= \textbf{switch} $\varphi$: \\
     \> \quad \= \textbf{case} $f$: \hspace{2cm} \= \textbf{return} $f(T_1,\ldots,T_k)$ \\
     \> \> \textbf{case} $x$: \> \textbf{if} $\Env(x)(T_1,\ldots,T_k) = \texttt{undef}$ \textbf{then} $\Env(x):=\Env(x)[(T_1,\ldots,T_k) \mapsto \hat{\sigma}_x]$ \\
     \> \> \> \textbf{return} $\Env(x)(T_1,\ldots,T_k)$ \\
     \> \> \textbf{case} $\lambda x_1,\dotsc,x_n.\,\varphi'$: \> $\Env(x_1) := T_1; \dotsc; \Env(x_n) := T_n$ \\
     \> \> \> \textbf{return} $\Mc{\varphi'}{T_{n+1},\dotsc,T_k}$ \\
     \> \> \textbf{case} $\varphi' \, (\varphi_1,\dotsc,\varphi_n)$: \> \textbf{for} $i=1,\ldots,n$ \textbf{do} \\
     \> \> \> \quad \= \textbf{let} $\tau_1 \to \dotsb \to \tau_{k'} \to \grtype = \mathrm{type}(\varphi_i)$ \\
     \> \> \> \> $f_i := \{(T'_1,\dotsc,T'_{k'}) \mapsto \Mc{\varphi_i}{T'_1,\dotsc,T'_{k'}} \mid T'_i \in \sem{\tau_i}{\mathcal{M}}{}, i=1,\ldots,n \}$ \\
     \> \> \> \textbf{return} $\Mc{\varphi'}{f_1,\dotsc,f_n, T_1, \ldots, T_k}$ \\
     \> \> \textbf{case} $\sigma x.\, \varphi'$: \> $\Env(x) := \{(T_1,\dotsc,T_k) \mapsto \hat{\sigma_x}\}$ \\
     \> \> \> \textbf{repeat} \\
     \> \> \> \> $f := \Env(x)$ \\
     \> \> \> \> \textbf{for all} $(T'_1,\dotsc,T'_k) \in \mathrm{dom}(\Env(x))$ \textbf{do} \\
     \> \> \> \> \quad \= $\Env(x) := \Env(x)[(T'_1,\dotsc,T'_k) \mapsto \Mc{\varphi'}{T'_1,\dotsc,T'_k}]$ \\
     \> \> \> \textbf{until} $f = \Env(x)$ \\
     \> \> \> \textbf{return} $\Env(x)(T_1,\dotsc,T_k)$
   \end{tabbing}

\caption{Neededness-based evaluation for abstract higher-order fixpoint algebra.}
	\label{alg:ahofa-mc}
\end{algorithm}

We explain the algorithm's functionality by considering the different cases for its argument $\varphi$.
 Upon reaching a basic function symbol, \Mcname simply applies the semantics of this basic function to the arguments in the argument list.
When \Mcname reaches a variable $x$ and the value of that variable at argument $(T_1,\dotsc,T_k)$ is defined, then its value is returned. Otherwise, the variable must be fixpoint-bound, and 
\Mcname has discovered a new tuple of arguments at which the value of this fixpoint is needed. This value is initialised as $\hat{\sigma}_x$, which also registers $(T_1,\dotsc,T_k)$ in $\Env$. 
In this case the initial value is returned.
 
At a $\lambda$-abstraction, a number of arguments corresponding to the abstracted variables are transferred from the argument list to $\Env$, i.e.~they are now treated as bound variables. 
In an application $\varphi(\varphi_1,\ldots,\varphi_k)$, \Mcname computes, for each argument, its full semantics by a number of recursive calls to $\Mcname$\footnote{This can be done lazily, in case the argument is not needed or has been already computed. We omit the details for this in order to keep the presentation simple.}. The obtained values (as functions)
are then added to the list of arguments.
 
Upon reaching a fixpoint binder for variable $x$, \Mcname (re-)sets $\Env(x)$ to the singleton definition that initialises the value of the fixpoint at $(T_1,\dotsc,T_k)$ to the default value 
of $\hat{\sigma}_x$. Then, for each argument tuple that is already discovered as necessary for the value at $(T_1,\dotsc,T_k)$, the algorithm computes a new value. Note that, during this process
\Mcname can reach the variable case and discover new argument tuples. This procedure of updating the value at all known argument tuples is repeated until both no new arguments are discovered for 
one round, and the value of the fixpoint at each of the tuples agrees with that of the last round. If this has happened, the value of the last iteration at $(T_1,\dotsc,T_k)$ is returned.


\paragraph*{Correctness.} The formal correctness proof for Alg.~\ref{alg:ahofa-mc} uses the following lemma which formalises the converse of Lemma~\ref{lem:monotonicity}. Take a term $\varphi$ that is typed with hypotheses 
$\Gamma, x^v$. Not only is it monotone (if $v = \varmon$), respectively antitone (if $v = \varant$) in the value of $x$. If the value of $\varphi$ also differs genuinely under two 
variable interpretations that only differ in $x$, then $x$ must occur freely in $\varphi$ and there are arguments to the value of $x$ on which this difference manifests itself. We write $x \sqsubset y$ to denote that $x \sqsubseteq y$ but $x \not = y$.

\begin{lemma}
\label{lem:semantic-reduction}
	Let $\mathcal{M}$ be a finite, and hence, complete lattice, 
	$\eta$ be a variable interpretation, 
	$f_1, f_2 \in \sem{\tau'}{}{}$ with $\tau' = \tau'_1\to\dotsb\to\tau'_k\to \grtype$ for some $\tau_1',\ldots,\tau_k'$, let $T_1,\dotsc,T_n$ be values with $T_i \in \sem{\tau_i}{}{}$ for $i=1,\ldots,n$, 
	$v \in \{\varmon,\varant\}$, and $\varphi$ be a $\muHO$ term such that $\Gamma, x^v\colon\tau'\vdash\varphi\colon\tau^{v_1}_1\to\dotsb\to\tau^{v_n}_n\to\grtype$. 
	If 
	  \[\sem{\varphi}{\mathcal{M}}{\eta[x \mapsto f_1]}(T_1,\dotsc,T_n) \sqsubset_\grtype \sem{\varphi}{\mathcal{M}}{\eta[x \mapsto f_2]}(T_1,\dotsc,T_n)\]
	then $x$ appears freely in $\varphi$, and there are $T'_1,\dotsc, T'_n$ such that 
	  \[f_1\, (T'_1,\dotsc,T'_n) \gtrless f_2\, (T'_1,\dotsc,T'_n)\]
	with ${\gtrless} = {\sqsubset_\grtype}$ if $v= \varmon$ and ${\gtrless} = {\sqsupset_\grtype}$ if $v= \varant$.
\end{lemma}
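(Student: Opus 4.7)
The plan is to induct on the syntax of $\varphi$. The statement that $x$ occurs freely in $\varphi$ follows from the auxiliary fact, proved by a separate straightforward induction, that $\sem{\varphi}{\mathcal{M}}{\eta}$ is independent of $\eta(x)$ whenever $x$ has no free occurrence; this would immediately contradict the strict inequality in the hypothesis. The interesting content is the construction of the witness tuple $T'_1,\ldots,T'_k$ (where $k$ is the arity of $\tau'$) at which $f_1$ and $f_2$ differ in the direction forced by~$v$.

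For the base cases, a function constant or a variable different from $x$ gives a semantic value independent of $\eta(x)$, so the hypothesis cannot hold; for $\varphi = x$, the typing rule for variables forces $v = \varmon$, and the hypothesis literally asserts $f_1(T_1,\ldots,T_n) \sqsubset_\grtype f_2(T_1,\ldots,T_n)$, so $T'_i := T_i$ witnesses. The $\lambda$-abstraction case reduces directly to the induction hypothesis on the body after moving the first arguments into the environment.

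The main case is application $\psi(\psi_1,\ldots,\psi_m)$. Writing $\eta_i := \eta[x \mapsto f_i]$, $G_i := \sem{\psi}{\mathcal{M}}{\eta_i}$ and $h^{(j)}_i := \sem{\psi_j}{\mathcal{M}}{\eta_i}$, the hypothesis unfolds to $G_1(h^{(1)}_1,\ldots,h^{(m)}_1)(T) \sqsubset_\grtype G_2(h^{(1)}_2,\ldots,h^{(m)}_2)(T)$. I introduce the intermediate value $z := G_1(h^{(1)}_2,\ldots,h^{(m)}_2)(T)$ and split on whether the strict step lies before or after~$z$. If it lies after, I apply the induction hypothesis to $\psi$ with argument list $(h^{(1)}_2,\ldots,h^{(m)}_2,T_1,\ldots,T_k)$ and variable $x$ at variance $v$, which yields the witness directly. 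If it lies before, I further decompose the change $\vec{h}_1 \to \vec{h}_2$ one coordinate at a time; together with Lemma~\ref{lem:monotonicity} applied to the subterm $\psi_j$, this isolates an index $j$ and a tuple on which $\sem{\psi_j}{\mathcal{M}}{\eta_1}$ and $\sem{\psi_j}{\mathcal{M}}{\eta_2}$ differ strictly in the direction determined by $v_j$, and the induction hypothesis on $\psi_j$ then produces the required witness for $f_1, f_2$. Because $\psi_j$ is typed in the context $(\Gamma,x^v)^{v_j}$, the composition of $v$ with $v_j$ yields exactly the variance needed for the recursive call; the case $v_j = \varboth$ is trivial, as then $x$ cannot appear freely in $\psi_j$ and both values coincide.

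For the fixpoint case $\varphi = \sigma y.\psi$ (with $y \neq x$ by well-namedness), I appeal to Remark~\ref{rem:kleene}: both fixpoint values stabilise to finite Kleene iterates $g_i^n$, and I run a secondary induction on $n$, the base being vacuous since $g_1^0 = g_2^0$. For the step, the strict inequality at iteration $n+1$ is decomposed through the intermediate $\sem{\psi}{\mathcal{M}}{\eta_1[y \mapsto g_2^n]}(T)$ in the same spirit as in the application case: a strict step on the $\eta$-side produces a witness via the primary induction hypothesis on $\psi$ in variable $x$, while a strict step on the $y$-side applies the primary induction hypothesis on $\psi$ in variable $y$ (variance $\varmon$ by the fixpoint typing rule) to yield a tuple at which $g_1^n$ and $g_2^n$ differ strictly, whereupon the secondary induction hypothesis finishes the argument. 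The principal obstacle throughout is the variance bookkeeping in the application case: the intermediate ground-type decompositions must align with the composition of variances along the chosen path through the syntax tree so that the $\gtrless$ produced at the leaf matches the direction required by the lemma, and this alignment is what invocations of Lemma~\ref{lem:monotonicity} must be carefully tuned to deliver.
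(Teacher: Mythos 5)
Your top-level strategy---structural induction on $\varphi$, with the free-occurrence claim handled by a separate observation---matches the paper's (the paper omits all details), and your base, variable and $\lambda$-cases are fine. The genuine gap is in the application case, and it recurs in the fixpoint case. From $G_1(\vec{h}_1)(T) \sqsubset_\grtype G_2(\vec{h}_2)(T)$ and the intermediate value $z = G_1(\vec{h}_2)(T)$ you may only conclude that one of the two links is a \emph{difference}, not a \emph{strict inequality} $\sqsubset_\grtype$; but the induction hypothesis can only be fed a strict inequality. For the three values to form a chain you would need the $h^{(j)}_1, h^{(j)}_2$ to be pointwise comparable in the polarity of the $j$-th argument position, which via Lemma~\ref{lem:monotonicity} would follow from $f_1$ and $f_2$ being comparable---but the lemma does not assume that. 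The same problem sinks the step where you claim to ``isolate an index $j$ and a tuple on which $\sem{\psi_j}{\mathcal{M}}{\eta_1}$ and $\sem{\psi_j}{\mathcal{M}}{\eta_2}$ differ strictly in the direction determined by $v_j$'': monotone functions do not reflect the order, so a strict change of $G_1$'s output under a change of one argument tells you only that the argument changed, not that it changed comparably. Concretely, over the diamond lattice $\{\bot,p,q,\top\}$ with $p,q$ incomparable, let $F$ be the monotone base function with $F(\bot)=F(p)=\bot$ and $F(q)=F(\top)=\top$, and take $\varphi = F(x)$, $f_1=p$, $f_2=q$: the hypothesis $\bot\sqsubset_\grtype\top$ holds, yet $p$ and $q$ compare in no direction. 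This is not merely a gap in your write-up---the same example refutes the statement as literally given---so the proof cannot be completed without an additional hypothesis that $f_1$ and $f_2$ are comparable (which does hold where the lemma is used, namely for consecutive Kleene approximants of the same fixpoint).

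Once comparability is added, your plan goes through essentially as sketched: Lemma~\ref{lem:monotonicity} forces the direction ($f_1 \sqsubseteq f_2$ if $v=\varmon$, the reverse if $v=\varant$), makes each operand pair $h^{(j)}_1, h^{(j)}_2$ pointwise comparable in the polarity prescribed by its variance (with equality in the $\varboth$ case, as you note), and hence renders every intermediate chain genuinely monotone, so that a strict inequality between endpoints localises to a strict link; and for two comparable, distinct functions a witness tuple with a strict, correctly oriented difference exists immediately. The identical repair is needed in your fixpoint case, where $g_1^n$ and $g_2^n$ are computed under different environments and are a priori incomparable.
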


\begin{proof}
By induction on the structure of $\varphi$. Details are omitted. 
\end{proof}

Next we state correctness of Alg.~\ref{alg:ahofa-mc}. It is not hard to imagine local fixpoint iteration to be sound (for least fixpoints, resp.\ complete for greatest ones) since it clearly only performs
part of a global fixpoint iteration that is sound and complete according to Kleene's Theorem. For completeness one has to see that the value of a fixpoint function on some argument is determined solely by
the value of that function on its dependent arguments. Hence, it suffices to iterate on these until stability is reached. 

We write $\Mcc{\eta}{\varphi}{T_1,\dotsc,T_k}$ for the result of the call of Alg.~\ref{alg:ahofa-mc} on $\mathcal{M}$ with arguments $\varphi$ and $[T_1,\dotsc,T_k]$ such that $\Env$ satisfies 
$\eta = \eta_\Env$. 

\begin{theorem}
\label{thm:correctness}
	Let $\mathcal{M}$ be a finite, and, hence, complete lattice, $\eta$ be a variable interpretation, $\varphi$ be a term of type $\tau_1 \to \dotsb \to \tau_k \to  \grtype$, let
	$T_1,\dotsc,T_k$ be values with $T_i \in \sem{\tau_i}{\mathcal{M}}{}$, and $\Env$ be such that $\eta = \eta_{\Env}$. 
	Then $\Mcc{\eta}{\varphi}{T_1,\dotsc,T_k} = \sem{\varphi}{\mathcal{M}}{\eta}(T_1,\dotsc,T_k)$.
\end{theorem}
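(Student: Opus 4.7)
The plan is to proceed by structural induction on $\varphi$, strengthened by the side-invariant that every call preserves the property $\eta = \eta_\Env$. This is consistent with the algorithm: whenever $\Env(x)$ is extended at a fresh tuple $(T_1,\ldots,T_k)$, the new entry is the default $\hat{\sigma}_x$, which is precisely the value $\eta_\Env$ already assigned to that tuple before extension, so the induced variable interpretation is unchanged.

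For the non-fixpoint cases the reasoning is routine. For a constant $f$, both sides evaluate to $f^{\mathcal{M}}(T_1,\ldots,T_k)$. For a variable $x$, a lookup or default-initialisation yields $\eta_\Env(x)(T_1,\ldots,T_k)$. For $\lambda$-abstraction, moving the first $n$ arguments from the list into $\Env$ mirrors the semantic substitution, and the inductive hypothesis on the body completes the case. For application $\varphi'(\varphi_1,\ldots,\varphi_n)$, the inductive hypothesis applied to each $\varphi_i$, which the algorithm computes pointwise across its full argument lattice, gives $f_i = \sem{\varphi_i}{\mathcal{M}}{\eta}$, and then the inductive hypothesis on $\varphi'$ combined with compositionality of the semantics closes the case.

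The only substantial case is $\varphi = \sigma x.\,\varphi'$. Termination of the outer \textbf{repeat} loop is clear from finiteness: at every non-terminal iteration either $\mathrm{dom}(\Env(x))$ grows strictly or, by Lemma~\ref{lem:monotonicity} and the inductive hypothesis on $\varphi'$, at least one pointwise value of $\Env(x)$ strictly increases for $\mu$ (dually for $\nu$). Write $\hat{g} := \sem{\sigma x.\,\varphi'}{\mathcal{M}}{\eta}$, let $D$ be the final domain of $\Env(x)$, and let $g$ denote $\Env(x)$ extended by $\hat{\sigma}_x$ outside $D$. Soundness $g \sqsubseteq \hat{g}$ (for $\mu$; dually for $\nu$) follows by an inner induction on the outer-loop counter: if the current $g$ is bounded above by $\hat{g}$, then by monotonicity and the inductive hypothesis on $\varphi'$, each new value computed in the next pass remains bounded by $\hat{g}$.

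Completeness is where Lemma~\ref{lem:semantic-reduction} is crucial, and this is the hard part. Stability of the loop combined with the inductive hypothesis on $\varphi'$ yields, for every $T' \in D$, the equality $g(T') = \sem{\varphi'}{\mathcal{M}}{\eta[x \mapsto g]}(T')$; moreover, in the final pass the algorithm never queries $x$ at a tuple outside $D$, for otherwise such a tuple would have been added to $\Env(x)$, contradicting stability. I would then show, by induction on the Kleene approximation level $n$, that $g \sqsupseteq \hat{g}_n$ holds on $D$, where $\hat{g}_0 := \bot_{\tau_x}$ and $\hat{g}_{n+1} := \sem{\varphi'}{\mathcal{M}}{\eta[x \mapsto \hat{g}_n]}$. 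Picking the minimal $n$ for which this fails together with a witness $T^* \in D$, stability and the defining equation for $\hat{g}_n$ give $\sem{\varphi'}{\mathcal{M}}{\eta[x \mapsto g]}(T^*) \sqsubset \sem{\varphi'}{\mathcal{M}}{\eta[x \mapsto \hat{g}_{n-1}]}(T^*)$, so Lemma~\ref{lem:semantic-reduction} furnishes a tuple $T''$ with $g(T'') \sqsubset \hat{g}_{n-1}(T'')$. The crux is then to verify, by inspecting how that lemma's witness is constructed through the syntactic occurrences of $x$ in $\varphi'$, that $T''$ is a tuple at which $x$ was queried during $\Mcname(\varphi',T^*)$, whence $T'' \in D$. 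This contradicts the minimality of $n$, completing the completeness direction. The $\nu$-case is dual, and instantiating the Kleene induction at its convergence level yields the theorem.
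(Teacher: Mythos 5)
Your plan follows the same route the paper indicates: structural induction on $\varphi$, with the fixpoint case split into a soundness direction via monotonicity and Kleene iteration and a completeness direction via Lemma~\ref{lem:semantic-reduction}; the paper omits all details of this proof. The only point needing care is the one you flag yourself as the crux: as stated, Lemma~\ref{lem:semantic-reduction} only asserts the \emph{existence} of some argument tuple on which the two interpretations of $x$ differ, so to close the contradiction you must strengthen it (or unwind its inductive construction) to guarantee that the witness is a tuple actually queried during $\Mcname(\varphi',T^*)$ and hence already in $\mathrm{dom}(\Env(x))$.
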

\begin{proof}
	The proof is, again, by induction on the structure of $\varphi$. Details are omitted. 
\end{proof}

A natural question that arises is the one after the time and space complexity of local higher-order fixpoint iteration. Two aspects
need to be considered here. First of all, it should be obvious that local evaluation cannot improve the worst-case. It is in fact not
hard to construct examples which a fixpoint term of higher-order such that its evaluation causes all argument values to be
explored. Consider the (order-1) term $\big(\nu F(x). F( (x \wedge \Diamond\neg x) \vee (\neg x \wedge \Box x))\big)(\myfalse)$ with
$\wedge,\vee,\neg,\Diamond,\Box$ interpreted in the usual way known from modal logic, over the powerset lattice induced by the
Kripke structure
\begin{center}
\begin{tikzpicture}[every state/.style={inner sep=2pt,minimum size=3mm}, node distance=17mm, semithick]
  \node[state,shape=ellipse] (n1)              {$n{-}1$};
  \node                      (d) [right of=n1] {\ldots};
  \node[state]               (2) [right of=d]  {$2$};
  \node[state]               (1) [right of=2]  {$1$};
  \node[state]               (0) [right of=1]  {$0$};
   
  \path[<-] (0) edge (1) edge [bend right] (2) edge [bend right] (n1)
            (1) edge (2) edge [bend right] (n1)
            (2) edge [bend right] (n1);
\end{tikzpicture}
\end{center} 
Even though the term evaluates to $\{0,\ldots,n-1\}$, local fixpoint iteration will successively discover all $2^n$ arguments to the
first-order function $F$ before termination. It is also possible to extend this example to an arbitrary higher order.

Second, the question after the space and time complexity of Alg.~\ref{alg:ahofa-mc} cannot be answered without making assumptions on the representation of the lattice and 
the complexity of evaluating base functions. So far, no assumptions have been made explicitly, even though it is clear that such 
functions should at least be computable for otherwise Alg.~\ref{alg:ahofa-mc} would not be well-defined. A reasonable assumption is
that each base function of order $k$ can be evaluated in time and space that is at most $(k-1)$-fold exponential in the size of the
underlying lattice, with $0$-fold meaning polynomial and $(-1)$-fold meaning logarithmic. Logarithmic bounds may seem highly 
restrictive at first glance, but they make sense in cases where the underlying lattice is obtained as the powerset lattice of some
other structure, see the example above. If this assumption is met, then it is not too hard to see that Alg.~\ref{alg:ahofa-mc} runs
in time and space that is at most $k$-fold exponential with $k$ being the order of the input term. This also assumes that the lattice 
is given in a logarithmically sized representation. Otherwise, the complexity drops by one exponential.


\section{Applications}
\label{sec:applications}

We present four applications of fixpoint evaluation in higher-order lattices using local fixpoint evaluation, and estimate how many computation steps can be saved compared to a na\"{\i}ve 
bottom-up and global fixpoint iteration.

\paragraph*{Constrained reachability problems.}
Reachability problems -- to decide whether some node is reachable from another in a directed graph -- are ubiquitous in computer science. In some applications, simple reachability is too coarse;
instead one wants to put constraints on the form of path under which the target node can be reached from the source, for instance in terms of distance, weight, shape or allowed sequence of 
edges. The latter can easily be formalised as a reachability problem constrained by some formal language. This has been investigated thoroughly for regular \cite{journals/tods/BarceloLLW12} 
and context-free languages \cite{Barrett:2000:FLC,la-reachpdl:2011,conf/fct/LangeL15} for applications in database theory, model checking, or in
static program analysis for heap-manipulating programs \cite{lev2000tvla,distefano2006local}. Little has been done for larger classes of languages.

We consider the context-sensitive language $L_{\mathsf{abc}} := \{a^n b^n c^n \mid n \ge 1 \}$ over the alphabet 
$\Sigma := \{ a,b,c \}$, and the problem whether for some given nodes $s,t$ of a directed, edge-labelled graph $G = (V,E)$ there is a path from $s$ to $t$ whose edge labels form a word in $L_{\mathsf{abc}}$.
Reachability problems can be interpreted in powerset lattices $(2^V, \subseteq)$ as they can be seen as least fixpoints of functions from sets of nodes to sets of
nodes. However, if $G$ is backwards-deterministic, i.e.\ for all $v,u,w \in V$, $a \in \Sigma$ we have $(v,a,u) \in E \wedge (w,a,u) \in E \Rightarrow v = w$, it is possible to formalise such problems over
a smaller lattice. 

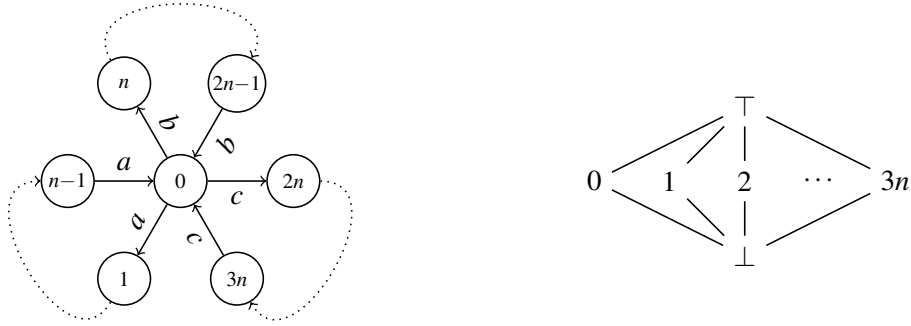
\begin{figure}
\begin{tikzpicture}[semithick,every state/.style={inner sep=1pt,minimum size=7mm,draw,font=\scriptsize}]

  \node[state] (m) at (0,0) {$0$};
  
  \foreach \x/\n in {1/{3n},2/{2n},3/{2n{-}1},4/n,5/{n{-}1},6/1} {
    \node[state] (s\x) at (240+\x*360/6:1.5cm) {$\n$};
  }
  
  \path[->] (m) edge                 node [sloped,above] {$a$} (s6)
                edge                 node [sloped,above] {$b$} (s4)
                edge                 node [sloped,below] {$c$} (s2)
;
  \path[<-] (m) 
                edge                 node [sloped,above] {$a$} (s5)
                edge                 node [sloped,below] {$b$} (s3)
                edge                 node [sloped,below] {$c$} (s1);
  \foreach \x/\y in {6/5,4/3,2/1} {
    \path[->,dotted,draw] (s\x) .. controls (240+\x*360/6:3cm) and (240+\y*360/6:3cm) .. (s\y);
  }

  \node (0) [right of=s2, node distance=4cm]             {$0$};
  \node (1) [right of=0] {$1$};
  \node (2) [right of=1] {$2$};
  \node (d) [right of=2] {$\cdots$};
  \node (e) [right of=d] {$3n$};
  \node (t) [above of=2] {$\top$};
  \node (b) [below of=2] {$\bot$};
  
  \foreach \x in {0,1,2,e}
    \path[-] (\x) edge (t) edge (b);

\end{tikzpicture}

\vspace*{-3mm}
\caption{Directed graph $G_n$ with $n > 1$ (left) for the language-constrained reachability problem example and corresponding base lattice (right).}
\label{fig:clover}
\end{figure}

Consider the graph $G_n$ depicted in Fig.~\ref{fig:clover} on the left. It contains a central state $0$ and around this three loops: an $a$-loop with $n-1$ states, a $b$-loop with $n$ states
and a $c$-loop with $n+1$ states. It is backwards-deterministic. 
Let $\grtype$ be interpreted by the lattice shown in Fig.~\ref{fig:clover} on the right. 
Intuitively, $\bot$ can be read as ``a path from source to the target has not been found yet'', and $\top$ signals that such a path has been found.

We use base functions $\mathcal{F} := \{ 0\colon \grtype, \ a,b,c\colon \grtype^\varmon \to \grtype, \ \mathsf{ite}\colon \grtype^\varboth \times \grtype^\varmon \to \grtype \}$ as follows.
The constant $0$ denotes the state $0$.
For any $v \in V$, $a(v)$ is the $a$-predecessor of $v$; likewise for $b$ and $c$. The value is $\bot$ if no such predecessor exists, in particular when applied to $\bot$. The value on
$\top$ is $\top$ itself. For instance, $c(0) = 3n$, $c(2n) = 0$, $c(v) = v-1$ if $2n < v \le 3n$, $c(\top) = \top$ and $c(v) = \bot$ otherwise.

In the powerset lattice $(2^V,\subseteq)$, $\mathsf{ite}$ could simply be interpreted as set union. However, here we interpret it as an \emph{if-then-else} in the following way.
Note that it is only monotonic in its second argument.
\begin{displaymath}
\mathsf{ite}(x,y) := \begin{cases}
\top &, \text{ if } x=0 \\
y &, \text{ otherwise }
\end{cases}
\end{displaymath}
Now let $\Var = \{ f,g\colon \grtype^\varmon \to \grtype, \ x\colon \grtype, \ F\colon (\grtype^\varmon \to \grtype)^\varboth \times (\grtype^\varmon \to \grtype)^\varboth \times \grtype^\varboth \to \grtype \}$ and consider the term 
\[
\varphi_{\mathsf{reach}} \enspace := \enspace \Big(\mu F(f,g,x).\, \mathsf{ite}\big(f(g(x)) , F(a \circ f, b \circ g, c(x))\big)\Big)\, \big(a, b, c(0)\big)
\]
where $\psi \circ \chi := \lambda x.\psi(\chi(x))$.

Using fixpoint unfolding and $\beta$-reduction one can see that the value of $\varphi_{\mathsf{reach}}$ becomes $\top$ when $a(b(c(0)) = 0$ or $a(a(b(b(c(c(0)))))) = 0$ or $a^3(b^3(c^3(0))) = 0$ 
and so on. Hence, evaluating $\varphi_{\mathsf{reach}}$ solves the reachability question ``is there a path from $0$ to $0$ under some word in $L_{\mathsf{abc}}$?''

We analyse how much computation power is being saved when computing the answer to the question of whether there is an $L_{\mathsf{abc}}$-path from
$0$ to $0$ in $G_n$. We compare four situations arising from the use of the standard powerset lattice vs.\ the optimised flat lattice of Fig.~\ref{fig:clover}, as well as local vs.\ 
global enumeration of all arguments to higher-order functions. 

Note that the three cycles in $G_n$ have lengths $n$, $n+1$ and $n+2$ respectively, which are always co-prime for each $n \ge 2$. 
Hence, the shortest $L_{\mathsf{abc}}$-path from $0$ to $0$ is the one that performs $(n+1)(n+2)$ many rounds on the $a$-cycle, then $n(n+2)$ rounds on the $b$-cycle and then $n(n+1)$ rounds on the
$c$-cycle. 

The following table shows the computational effort needed to evaluate $\varphi_{\mathsf{reach}}$ in terms of the number of arguments, resp.\ width of the table representing the function $F$. It also shows the space that is needed in order to represent one argument, i.e.\ a triple $(f,g,x)$ where $f,g$ are first-order functions 
and $x$ is a lattice element. Finally, in all cases the height of the table, i.e.\ the number of fixpoint iterations needed until $F$ stabilises, is in $\mathcal{O}(n^3)$.
\begin{center}
\begin{tabular}{l|c|c|c|c}
 & \multicolumn{2}{c|}{powerset lattice} & \multicolumn{2}{c}{flat lattice} \\ \hline
evaluation & \hspace*{3mm} global \hspace*{3mm} & \hspace*{5mm} local \hspace*{5mm} & \hspace*{3mm} global \hspace*{3mm} & \hspace*{5mm} local \hspace*{5mm} \\ \hline\hline
\rule[-3mm]{0pt}{8mm}width of table for $F$ 
  & $2^{2^{\mathcal{O}(n)}}$ 
  & $\mathcal{O}(n^3)$ 
  & $2^{\mathcal{O}(n\cdot\log n)}$ 
  & $\mathcal{O}(n^3)$ 
  \\ \hline
\rule[-3mm]{0pt}{8mm}size of arguments 
  & \multicolumn{2}{c|}{$2^{\mathcal{O}(n)}$} 
  & \multicolumn{2}{c}{$\mathcal{O}(n\log n)$} 
\end{tabular}
\end{center}

\begin{figure}
\hfill
\begin{minipage}{.34\textwidth}
\begin{align*}
P    \to\ &\blck \cdot B(S)  \\
S    \to\ &\spc \mid \spc \cdot S \\
B(I) \to\  &\epsilon      \\
     \mid\ &I \cdot \cod \cdot B(I) \\
     \mid\ &I \cdot \blck \cdot B(S \cdot I)
\end{align*}
\end{minipage}
\hfill
\begin{minipage}{.36\textwidth}
\begin{tabbing}
\blck \\
\spc \= \spc \= \cod \\
\spc \> \spc \> \blck \\
\spc \> \spc \> \spc \= \cod \\
\spc \> \spc \> \spc \> \cod \\
\spc \> \spc \> \cod \\[-8mm]
\end{tabbing}
\end{minipage}
\hfill
\begin{minipage}{.3\textwidth}
\begin{tikzpicture}[semithick, node distance=12mm]
  \node (0)                                 {$0$};
  \node (1) [right of=0, node distance=4mm] {$1$};
  \node (t) [above right of=0]              {$\top$};
  \node (b) [below right of=0]              {$\bot$};
  \node (n) [above right of=b]              {$n$};
  \node (d) [right of=1, node distance=6.5mm] {$\cdots$};  
  
  \path[-] (0) edge (t) edge (b)
           (1) edge (t) edge (b)
           (n) edge (t) edge (b);
\end{tikzpicture}
\end{minipage}
\caption{Higher-order grammar $\mathcal{G}_{\mathsf{ind}}$ (left); example word (middle); lattice $\mathcal{M}_w$ (right).}
\label{fig:hogrammar}
\end{figure}
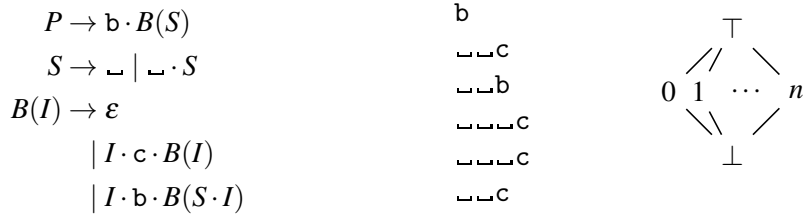

\paragraph*{Parsing of programming languages with indentation.}
Some programming languages like \textsc{Haskell} or \textsc{Python} use indentation as a syntax element. Such an effect can be described conveniently by the higher-order
grammar $\mathcal{G}_{\mathsf{ind}}$ shown in Fig.~\ref{fig:hogrammar} (left), over the terminal alphabet 
$\Sigma = \{ \blck, \cod, \spc \}$ (for ``block'', ``code'' and ``space''). 
We refrain from 
defining higher-order grammars \cite{Maslov74,Wand74} formally, since the technicalities needed for this small example are quite intuitive: 
$B(I)$ generates a block of code at indentation level $I$. The block can either be empty, contain one line followed by the rest of this
block at the same indentation level, or start a new block at a greater indentation level. The symbol $S$ is used to generate a sequence 
of space characters `\spc'. Finally, $P$ generates a program as a block at some initial indentation level. An example word, generated
from this grammar and formatted in order to visualise indentation best, is shown in the middle of Fig.~\ref{fig:hogrammar}.

Suppose a word $w = a_0 \ldots a_{n-1} \in \Sigma^*$ is given. This gives rise to an interpretation of the symbols in the grammar above 
as follows. Let $\mathcal{M}_w$ be the lattice shown in Fig.~\ref{fig:hogrammar} on the right. We use the following base functions, 
derived from the terminal symbols and the constructors in the higher-order grammar.
\begin{displaymath}
\Func_{\mathsf{ind}} = \{ \blck,\cod,\spc,\epsilon\colon \underbrace{\grtype^\varboth \times \grtype^\varboth \to \grtype}_{\tau}, \enspace \cdot,{\mid}\colon \tau^\varmon \times \tau^\varmon \to \tau, \enspace
 \mathsf{start},\mathsf{end}\colon \grtype\}
\end{displaymath}
Intuitively, \blck, \cod, \spc\ and $\epsilon$ are used to mark sections of the word in which the corresponding symbols (resp.\ the empty
word) occur. The top and bottom element of the lattice are used to signal true/false. The interpretation of these functions is therefore 
simply
\begin{displaymath}
\epsilon(i,j) = \begin{cases} \top &, \text{ if } i=j \\ \bot &, \text{ otherwise} \end{cases}
\quad \text{and} \quad
a(i,j) = \begin{cases} \top &, \text{ if } i+1=j \text{ and } a_i = a \\ \bot &, \text{ otherwise} \end{cases}
\end{displaymath}
for $a \in \Sigma$. The two constructors $\mid$ and $\cdot$ denoting disjunctive choice and concatenation in the higher-order grammar
are interpreted as follows.
\begin{displaymath}
(f {\mid} g)(i,j) = \begin{cases} \top &, \text{ if } f(i,j) = \top \text{ or } g(i,j) = \top \\ \bot &, \text{ otherwise} \end{cases}
\end{displaymath}
and
\begin{displaymath}
(f \cdot g)(i,j) = \begin{cases} \top &, \text{ if there is } h \text{ s.t. } f(i,h) = \top = g(h,j) \\ \bot &, \text{ otherwise} \end{cases}
\end{displaymath}
Finally, we need two constants $\mathsf{start}$ and $\mathsf{end}$ which are interpreted as $0$ and $n$, respectively.

The nonterminals in the higher-order grammar can be seen as (fixpoint) variables, hence we have 
$\Var = \{ P, S, I\colon \tau, B\colon \tau^\varmon \to \tau \}$. Then $\mathcal{G}_{\mathsf{ind}}$ immediately becomes a 
second-order term of $\muHO$ over $\Func_\mathsf{ind}$ and $\Var$, since recursion in grammars is captured by least fixpoints. 
The problem of evaluating $P(\mathsf{start},\mathsf{end})$ over $\mathcal{M}_w$ is then equivalent to parsing $w$ w.r.t.\ 
$\mathcal{G}_{\mathsf{ind}}$. 

Clearly, the space and time needed to evaluate $P(\mathsf{start},\mathsf{end})$ is dominated by the fixpoint iteration for $B$ as the only second-order variable. The number of possible arguments
to it is $2^{\mathcal{O}(n^2)}$. Local fixpoint iteration only discovers a fraction of these, though. Note that $B$ is initially evaluated on $S$, and -- when the recursive is called on argument 
$I$ -- it needs the values of $B$ on $I$ itself as well as on $S \cdot I$. Hence, it only ever discovers $S, S^2, S^3, \ldots$. Moreover, it is not hard to see that $S^k$ maps two positions $(i,j)$
of an underlying word $w = a_0\ldots a_{n-1}$ to $\top$, if $j-1-i \ge k$ and $a_h = \spc$ for $h=i,\ldots,j-1$. Hence, the number of possible arguments to $B$ discovered in this way is bounded
by $n-1$ and so, again, local higher-order fixpoint iteration realises an exponential reduction in space complexity in this example.

\paragraph*{Model checking Higher-Order Fixpoint Logic.}
Fixpoints play a fundamental role in model checking, where properties of the runtime behaviour of programs are typically expressed in temporal logics, the most prominent of which are LTL and CTL. 
Fixpoints are used there to express limit behaviour as in reachability, safety and fairness \cite{ICALP::EmersonC1980}. The true power of fixpoints is unleashed in logics that extend modal logic by
extremal fixpoint quantifiers like the well-known modal $\mu$-calculus \cite{Kozen83}. A lesser known extension of this is Higher-Order Fixpoint Logic (\hfl) \cite{DBLP:conf/concur/ViswanathanV04}, a
highly expressive specification logic that mixes modal logic, a typed $\lambda$-calculus and fixpoint quantifiers. Its model checking problem is decidable over finite transition systems, albeit
$k$-fold exponential in the order of involved function types \cite{als-mchfl07,BLL:RP17}.

We refer to the literature for a self-contained definition of \hfl \cite{DBLP:conf/concur/ViswanathanV04}. With the preliminary work on abstract higher-order fixpoint algebra in Sect.~\ref{sec:prelim}
we can simply present \hfl as a special instantiation of this algebra. The base type $\grtype$ is interpreted as the powerset lattice $(2^\States, \subseteq)$ of the state set of a transition system 
with edge labels from some set $\Actions$ and propositional labels from some set $\Prop$. This gives rise to an interpretation of all higher-order types as functions on sets of states; a function of type $\grtype^\varmon \to \grtype$ for instance
is known as a (monotonic) \emph{predicate transformer}.

The set of ground functions then is $\mathcal{F}:= \{ \wedge, \vee\colon \grtype^\varmon \times \grtype^\varmon \to \grtype, \enspace \neg\colon \grtype^\varant \to \grtype \} \cup \{ \mydia{a}, \mybox{a}\colon \grtype^\varmon \to \grtype \mid a \in \Actions \} \cup \{p \colon \grtype \mid p \in \Prop\}$
reflecting the Boolean, modal and propositional parts of the logic. 
Well-typed terms of the higher-order fixpoint algebra over this $\mathcal{F}$ are exactly the formulas of \hfl; and the standard semantics is the same as the one derived from the generic semantics 
in Sect.~\ref{sec:prelim} for this set of terms.

Consider the following formula describing the property $\varphi =$ ``there is an infinite $b$-path such that the $i$-th node on 
this path is the start of an $a$-path of length $2^{i}$ ending in a $p$-node, for any $i \ge 0$'', as well as the family of 
transition systems $\mathcal{T}_n$ on the right. 

\begin{minipage}{.42\textwidth}
\begin{displaymath}
\varphi := \big(\nu F.\,\lambda f.\,(f\, p) \wedge \mydia{b}(F (f \circ f))\big)\, (\lambda x.\mydia{a}x)
\end{displaymath}
\end{minipage} \hfill 
\begin{minipage}{.45\textwidth}
  \begin{tikzpicture}[node distance=13mm, every state/.style={inner sep=2pt,minimum size=3mm}]
    \node[state,label=90:{$p$}] (0)              {$0$};
    \node[state]                (1) [right of=0] {$1$};
    \node[state]                (2) [right of=1] {$2$};
    \node                       (d) [right of=2] {$\cdots$};
    \node[shape=ellipse,draw,inner sep=1pt] (e) [right of=d] {$n{-}1$};
    
    \path[->] (0) edge [bend left] node [above] {$a$} (1)
                  edge [loop left] node [left]  {$b$} ()
              (1) edge [bend left] node [above=-2pt] {$a,b$} (0)
                  edge [bend left] node [above] {$a$} (2)
              (2) edge [bend left] node [above] {$a$} (1)
                  edge [bend left] node [below,very near start] {$b$} (0)
                  edge [bend left] node [above] {$a$} (d)
              (d) edge [bend left] node [above] {$a$} (2)
                  edge [bend left] node [above] {$a$} (e)
              (e) edge [bend left] node [above] {$a$} (d)
                  edge [bend left] node [below,very near start] {$b$} (0);
  \end{tikzpicture}
\end{minipage}

\noindent
where $\varphi \circ \psi := \lambda x.\,\varphi\, (\psi\, x)$, over $\Var = \{ x\colon \grtype, \ f\colon \grtype^\varmon \to \grtype, \ F\colon (\grtype^\varmon \to \grtype)^\varmon \to \grtype \}$.
We use $F$ in the following to abbreviate the subformula $\nu F.\lambda f.(f\, p) \wedge \mydia{b}(F\, (f \circ f))$.

Only state $1$ satisfies $\varphi$. Now note that $F$ is a second-order fixpoint
taking as arguments a term interpreted as a first-order function of the kind $2^{[n]} \to 2^{[n]}$. Hence, 
even for $n=2$, there already are 256 of them, 
and na\"{\i}ve fixpoint iteration would tabulate all of them first before computing the values of $F$ on them. On the other hand, all that is needed is $F$'s value on functions $\myddia{a}^{2^i}$ where 
$\myddia{a}(S) = \{ t \in [n] \mid \exists t \in [n]$ s.t.\ $\Transition{s}{a}{t} \}$. The following puts the number of such different
functions which are being discovered by local fixpoint iteration in relation to the number of otherwise possible function argument.
\begin{center}
\begin{tabular}{r||c|c|c|c|c|c|c}
$n$                                   & 2   & 3                & 4 & 5 & 6 & 7 & \ldots \\ \hline
possible arguments to $F$             & 256 & $1.6 \cdot 10^6$ & $1.8\cdot 10^{19}$ $1.5\cdot10^{48}$ & $3.9\cdot 10^{115}$ & \ldots & \ldots & \ldots \\ \hline
discovered in local iteration & 2   & 2                & 2 & 3 & 3 & 4 & \ldots
\end{tabular}
\end{center}
The numbers can be verified either through manual computation of the functions $\myddia{a}^{2^i}$ for $i=0,1,\ldots$ on each 
$\mathcal{T}_n$ or using the implementation of Alg.~\ref{alg:ahofa-mc} mentioned in the conclusion below.

\newcommand{\zero}{{\bf{0}}}
\newcommand{\one}{{\bf{1}}}
\newcommand{\two}{{\bf{2}}}

  \paragraph*{Abstract interpretation of functional languages.}
Strictness analysis for (lazy) functional languages tries to figure out
whether an argument to a function must always be evaluated. In this case
compilers may force the evaluation of the argument thus saving space and
time to create closures and allowing for parallelisation.
Strictness analysis may be formulated as an abstract
interpretation as e.g.\ in \cite{burn1986strictness}. A function
$f\colon D\times D\times\dotsm\times D\rightarrow D$ is \emph{strict}
in its $i$-th argument, when $f(d_1,d_2,\ldots,d_{i-1},\bot_D,d_{i+1},\ldots d_k)=\bot_D$ for
a concrete base domain $D$. As this may be uncomputable, in \cite{burn1986strictness},
functions are interpreted \emph{abstractly} over the
domain $\two \enspace := \enspace \{ \zero, \one \}$ (with $\zero\sqsubseteq \one$), where $\zero$
means \emph{definitely undefined}, and where $\one$ means \emph{might be defined}.
Examples of abstract interpretations of common base values are (for $x,y,z\in\two$),
\begin{itemize}
\item constants of base domains such as integers or boolean values are abstracted to
  $\one$ (not \emph{undefined});
\item first-order functions such as addition are strict in all arguments, e.g.,
  $x+y=\zero$ unless $x=y=\one$;
\item if-then-else: $\mathsf{ite}(x,y,z) = x\wedge (y\vee z)$, where elements of
  $\two$ are read as boolean values. If-then-else might only be defined if both
  the condition and at least one of the then-else arguments might be defined. Otherwise it is
  definitely undefined. Note that, in this example, we use $\mathsf{ite}$
  in the traditional sense of functional programming.
\end{itemize}
As an example of the application of \Mcname to the abstract interpretation of
functional languages, we choose 
$\mathcal{F} := \{ \mathsf{ite} \colon \grtype^\varmon \times \grtype^\varmon \times \grtype^\varmon  \to \grtype \}$
and 
$\Var = \{ x\colon \grtype,\ f\colon \grtype^\varmon \to \grtype, p\colon \grtype^\varmon \to \grtype,\ I\colon (\grtype^\varmon \to \grtype)^\varmon \times
(\grtype^\varmon \to \grtype)^\varmon \times \grtype^\varmon \to \grtype \}$. Consider
the term $\varphi \enspace := \enspace \mu I(f,p,x).\, \mathsf{ite}(p(x),I(f,p,f(x)),x)$.
It essentially describes an iterated application of some function $f$ until a predicate $p$ holds.
In order to show that $\varphi$ is strict in $x$ for given functions $f_0$ and $p_0$, one needs to evaluate
$\varphi(f_0,p_0, \zero)$ by fixpoint unfolding and $\beta$-reduction. If $p_0(\zero) = \zero$, that is, the
termination predicate is itself strict, then \Mcname terminates in one step proving strictness of
$\varphi$ in its third argument. If $p_0(\zero) = \one$, that is, $p$ is essentially a constant
true or constant false predicate, we need to evaluate
$\one \wedge (\varphi(f_0,p_0,f_0(\zero)) \vee \zero) = \varphi(f_0,p_0,f_0(\zero))$ next.
If $f_0(\zero)=\zero$, that is, $f_0$ is strict itself, we have reached a fixpoint and can conclude strictness in
$x$ as well. If, however, $f_0(\zero)=\one$, we obtain an overall result of $\one$, not showing strictness in $x$.
This is plausible for constant functions $f$ and $p$.
Using local iteration this is in fact the only computation that takes place, whereas a na\"{\i}ve global
fixpoint computation would start by tabulating all possible triples of type
$(\grtype^\varmon \to \grtype)^\varmon \times (\grtype^\varmon \to \grtype)^\varmon \times \grtype^\varmon$,
which, for the lattice $\two$, amounts to $4\cdot 4\cdot 2=32$ table columns.


\section{Limitations of Neededness Analysis and Optimisation}
\label{sec:operands}
As mentioned in the introduction to Sect.~\ref{sec:algo}, Algorithm \Mcname does not use local evaluation on operand-side subterms but rather computes their value fully. If such an operand has a function type, its value on all its arguments might not be needed either. Consider the first example from Sect.~\ref{sec:applications} about formal-language
constrained reachability problems. Clearly, the values of the order-$1$-functions stored in the parameters $f$ and $g$ are not needed at most arguments. Hence, computing 
their value fully appears to be wasteful. 

Algorithm \Mcname computes values of operand-side subterms fully due to the termination criterion for the computation of fixpoint terms: iteration stops when both no new argument tuples 
have been discovered during a round of the repeat-loop computing the semantics, and the value of the fixpoint in question is stable on all existing tuples. This, of course, requires 
some way of deciding whether a discovered argument is actually new. Going back to the example in Sect.~\ref{sec:applications}, Algorithm \Mcname successively discovers the argument tuples 
$[a, b, c(0)], [a^2, b^2, c^2(0)], \dotsc$ Eventually, these argument tuples begin to repeat, which is when the loop terminates. However, deciding whether e.g.~$[a^i, b^i, c^i(0)]$ 
is the same argument tuple as $[a^j, b^j, c^j(0)]$ requires knowing the value of the function type arguments at all \emph{their} arguments. One could assume that it is enough to know just their 
value on arguments actually needed in the iteration, but this approach fails readily: already for $[a, b , c(0)]$ and $[a^2, b^2, c^2(0)]$, for $n \geq 2$, we see that $c^i(0) = 3n-(i+1)$ for 
$i \in \{1,2\}$, whence $a^i(b^i(c^i(0))) = \bot$ for either $i$, and, in fact, all $i \leq n(n-1)(n-2)$, since these differ on hitherto undiscovered arguments. Hence, any algorithm that tests equality of function type arguments only on tuples already 
identified as necessary for the computation must fail here. Moreover, since the base functions $a,b,c$ are actually interpreted, instead of e.g.\,tree constructors as in the case of higher-order model checking, a simple flow analysis (e.g. $0$-CFA) fails to detect which functions are duplicates unless one also inspects the behavior of the base functions. Hence, safe approach to avoid the  error sketched above is to compute values of argument-side functions -- which are necessarily not of the
highest type order occurring in the term under consideration -- in full. 

However, this does not mean that this is always necessary. In the example from Sect.~\ref{sec:applications}, one can readily see that the value of e.g.~$f$ will always be $\bot$ on all arguments 
that are not in $\{0,\dotsc,n-1\}$, since $f$ only contains powers of $a$. This kind of domain-specific approach, together with e.g. flow analysis and the choice of an appropriate lattice, could be used 
to cut down the amount of computations necessary.


\section{Conclusion}
\label{sec:concl}

We have lifted the notion of local fixpoint iteration, resp.\ neededness analysis, for the evaluation of first-order fixpoint
functions to fixpoint functions at arbitrary higher order. For generality purposes we have defined an abstract algebra \muHO
combining a simply typed $\lambda$-calculus over (possibly higher-order) base functions with fixpoints at arbitrary type orders. 
The examples in Sect.~\ref{sec:applications} show that this can vastly reduce the number of values that are being computed in fixpoint iterations, compared to the na\"{\i}ve global approach. 

A conceptual implementation of \muHO and Alg.~\ref{alg:ahofa-mc} is available.\footnote{\url{https://github.com/muldvarp/LocalHOFPIter}} 
It does not compete with specialised tools like higher-order model checkers but rather focuses on displaying the effect that
local fixpoint iteration has in comparison to global iteration for higher-order fixpoints.

Work on fixpoint iteration for higher-order functions can be continued in several directions. The most pressing issue is an
extension to \emph{fully} local fixpoint iteration, which would also employ local evaluation at orders beneath the top one, 
bearing in mind the obstacles to overcome which have been discussed in Sect.~\ref{sec:operands}. Significant progress on this front likely requires giving up the full genericity of the algorithm. For example, many intersection-type based HORS model checkers (e.g.\,\cite{DBLP:conf/csl/BroadbentK13,DBLP:conf/popl/RamsayNO14}) require backwards reasoning alongside the base functions. For example, acceptance of an automaton in a node of a tree depends on its children, i.e.\,the arguments to the tree constructor in question, and the relationship is readily available. Conversely, in the present form, our algorithm makes no assumptions on the (behavior of) the base functions, whence it can not infer which values of a given argument might yield a desired function value.

The acute reader may have wondered why \muHO does not feature operators $\sqcup, \sqcap$ for suprema and infima at arbitrary
types. It would in fact be possible to add these, and algorithm \Mcname can be extended accordingly to handle them just like
other base functions are being handled. They are not included in the syntax of \muHO here for the following reason: when $\sqcup, \sqcap$
are present in the syntax one would expect the distributivity laws like 
$\varphi \sqcup (\psi \sqcap \chi) \equiv (\varphi \sqcap \psi) \sqcup (\varphi \sqcap \chi)$ to hold. But in arbitrary lattices, 
such laws do not necessarily hold; they only do in distributive latttices. In order not to confuse the issue or make false assumptions
we therefore prefer to introduce $\sqcup,\sqcap$ as base functions when necessary and appropriate. This prevents us from restricting 
the semantics of \muHO terms to distributive lattices only. Note that the lattices depicted in Figs.~\ref{fig:clover} and 
\ref{fig:hogrammar} are not distributive.

Algorithm \Mcname makes no assumptions on the order in which needed arguments are evaluated. In data flow analyses, giving 
precedence to the arguments in the form of heuristics has turned out to be beneficial for efficiency purposes, c.f.\ 
\cite[Chp.~6]{nielson2015principles}. It remains to be seen whether such heuristics can be extended to higher orders as well.

Most static program analyses in abstract interpretation work with rather rich lattices as base domains which cannot be cast into
the scheme of a simply typed $\lambda$-calculus over a single base type $\grtype$ as it is used here. We remark, though, that
an extension to a many-sorted logic over several base types is straight-forward, not only regarding the type system but, most
importantly, algorithm \Mcname. The same holds for product types on the right of function arrows. It then remains to be seen
how far the type system can be enriched without seriously interfering with the ability to evaluate higher-order fixpoints 
locally.

\bibliographystyle{eptcs}
\bibliography{literature}

\newpage


\end{document}